\newcommand{\tinyskip}{\vspace*{-.5\baselineskip}}
\newcommand{\miniskip}{\vspace*{-.5\baselineskip}}
\newcommand{\shrink}{\vspace*{-0.7\baselineskip}}
\newtheorem{lemma}{Lemma}
\newtheoremstyle{mydef}
{2ex}
{2ex}
{\itshape}
{}
{\scshape}
{: }
{0.5em}
{}
\theoremstyle{mydef}
\newtheoremstyle{remark}
{1ex}
{1ex}
{\normalfont}
{}
{\scshape}
{: }
{0.5em}
{}
\theoremstyle{remark}
\newtheoremstyle{assumption}
{2ex}
{2ex}
{\normalfont}
{}
{\scshape}
{: }
{0.5em}
{}
\theoremstyle{assumption}
\begin{document}
	
	\title{Efficiently Transforming Tables for Joinability}

	\author{Arash Dargahi Nobari}
	\affiliation{%
		\institution{University of Alberta}
		\city{Edmonton}
		\state{Canada}
	}
	\email{dargahi@ualberta.ca}
	
	\author{Davood Rafiei}
	\affiliation{%
		\institution{University of Alberta}
		\city{Edmonton}
		\country{Canada}
	}
	\email{drafiei@ualberta.ca}

	\begin{abstract}
		Data from different sources rarely conform to a single formatting even if they describe the same set of entities, and this raises concerns when data from multiple sources must be joined or cross-referenced.
		Such a formatting mismatch is unavoidable when data is gathered from various public and third-party sources. Commercial database systems are not able to perform the join when there exist differences in data representation or formatting, and manual reformatting is both time consuming and error-prone.
		We study the problem of efficiently joining textual data under the condition that the join columns are not formatted the same and cannot be equi-joined, but they become joinable under some transformations. 
		The problem is challenging simply because the number of possible transformations explodes with both the length of the input and the number of rows, even if each transformation is formed using very few basic units. We show that an efficient algorithm can be developed based on the common characteristics of the joined columns, and develop one such algorithm over a rich set of basic operations that can be composed to form transformations. We compare both the coverage and the running time of our algorithm to a state-of-the-art approach, and show that our algorithm covers every transformation that is covered in the state-of-the-art approach but is a few orders of magnitude faster, as evaluated on various real and synthetic data.
	\end{abstract}
	
	
	\maketitle
	
%
	
	\section{Introduction}
With the growing extent of data available in the public domain and from third-party sources, many organizations find data outside enterprise databases relevant in their decision making process and want to integrate such data sources with their internal databases.
However, integrating data from different sources can be challenging due to formatting mismatches and sometimes the lack of schema.
When data are obtained from different sources, there is no guarantee that the same pieces of data (e.g., name, phone, address) from two different sources follow the same formatting.
Even different units within the same organization, when operating independently, may format the same data differently. 
For example, a phone number can be formatted as \texttt{\small\mbox{(780) 432-3636}} by one source, \texttt{\small\mbox{+1 780 432-3636}} by another source and \texttt{\small\mbox{1-780-432-3636}} by a third source.
Manually performing \textit{extract, transform, load} (ETL) for each source can be tedious and time consuming. An automated join operation is considered an important step towards the vision of \textit{self-service data preparation}, which is estimated to be over \$1 billion software market~\cite{Gartner-2016}.

\begin{figure*}[htb]
	\resizebox{1\linewidth}{!}{
		\begin{tabular}{lll}
			\hspace{-0.2cm}
			\begin{tabular}{l|l}
				{\bf Course} & {\bf Contact email}\\ \hline
				CMPUT 291 & drafiei@ualberta.ca \\
				CMPUT 391 & mario.nascimento@ualberta.ca \\
				PHYS 524 & gingrich@ualberta.ca \\
				PHYS 512 & andrzej.czarnecki@ualberta.ca \\
				INTD 350 & michael.bowling@ualberta.ca \\
				N344 & gosgnach@ualberta.ca
			\end{tabular}
			~\hspace{0.05cm}
			\begin{tabular}{l|l}
				{\bf Name} & {\bf Department}\\ \hline
				Rafiei, Davood & CS (2000) \\
				Nascimento, Mario A & CS (1999)\\
				Gingrich, Douglas M & Physics (1993) \\
				Prus-Czarnecki, Andrzej & Physics (2000) \\
				Bowling, Michael & CS (2003) \\
				Gosgnach, Simon & Physiology (2006)
			\end{tabular}
			~\hspace{0.05cm}
			\begin{tabular}{l|l}
				{\bf Name} & {\bf Phone} \\ \hline
				D Rafiei & (780) 433-6545 \\
				M A Nascimento & (780) 428-2108 \\
				D Gingrich & (780) 406-4565 \\
				A Prus-czarnecki & (780) 433-8303 \\
				M Bowling & (780) 471-0427 \\
				S Gosgnach & (780) 432-4814
			\end{tabular}
		\end{tabular}
	} 
	\caption{Example joinable tables}
	\label{fig:exampleTables}
	\shrink
\end{figure*}  

As an example, consider the tables shown in Figure~\ref{fig:exampleTables}. Two of the tables are obtained from the University of Alberta websites, and the third is looked up from Edmonton white pages with the last four digits of phone numbers changed for privacy reasons.
The two tables on the right list staff names, departments, and phones. The name column is common between the two tables, but the values are formatted differently, and the tables cannot be joined using equality join. The two tables on the left do not have any common columns, but they describe the same entities (here staffs), and one may come up with some rules to map names in one table to email addresses in the other table. Such mapping rules can be crucial when data is integrated from different sources, but constructing them is not always straightforward and can be quite time consuming. Instead, a tool may analyze the data sources and recommend mappings that
make a join possible. It is much easier for a data scientist to verify or select a subset of the recommendations made by the tool instead of coming up with all those mappings.

This paper studies the problem of joining tables despite the differences in data representation or formatting. In particular, we want to find transformations that map a column in one table to a column in another table with some desirable properties:
\begin{enumerate}
	\item \textit{efficiency} The space of possible transformations can be huge, especially for long textual columns and with a large number of rows. At the same time, a data scientist wants to learn about the mappings in real-time and make decisions on how the tables should be joined.
	\item \textit{noise handling} Data from different sources can be noisy and the matching rules can be complex. For example, no single rule can map names to email addresses, but a few rules may cover a large number of them.
	\item \textit{coverage} Transformations that have enough coverage may be obtained from a small subset of the data and are applied to a larger set, and this can be useful when data is changing and new rows are added over time. 
	\item \textit{optimality} 
	A table column may be mapped to another table column under many different sets of transformations, and some of those are more desirable than others. Ideally, we want to find the \textit{best} set of transformations, where the best may be defined in terms of the generality, the minimality, or other optimality criteria.
\end{enumerate}

Despite the large body of work on efficiently supporting equality join in relational databases (e.g., ~\cite{Shapiro1986join,Lee2012Join,Tian2016Join}), there are only a few that address the problem in the presence of mismatches, and a common theme here is to consider matching under a similarity function~\cite{Chaudhuri2003Robust,fastjoin}. A problem with a similarity join in general is that we know two rows are similar, but we may not know what makes them similar and how one can map one row to the other. 
Our work follows the line of work on program synthesis where we want to find declarative transformations under which an equality join becomes possible. 
Finding data transformations has a far greater impact than simply finding matching pairs with implications for predicting missing values~\cite{FlashFill,BlinkFill}, schema mapping~\cite{cate2017approximation,bonifati2019interactive} and data cleaning and repairing~\cite{he2016interactive}.
In a recent relevant work, 
Zhu et al.~\cite{autojoin} find transformations that make one column joinable with another column under the assumption that a single transformation maps either all rows or a pre-selected subset of rows. 
Our work relaxes this condition since a single transformation may not cover all input rows (as can be seen in our example tables), and the user may not be able to select a subset that conforms to a single transformation. Compared to Zhu et al.~\cite{autojoin}, our algorithm is also faster by a few order of magnitude, which is achieved by leveraging the general characteristics of the matches and novel pruning strategies that are quite effective.

Our contributions can be summarized as follows:
(1) we propose a novel framework for efficiently learning transformations based on the copying relationship between joinable row pairs;
(2) we develop a few strategies (e.g. early abandoning, caching, and bounding the parameter space) that significantly speeds up our transformation discovery, especially when there are millions of candidates;
(3) we provide an analysis of the performance of our algorithm in terms of the running time and the coverage of transformations, compared to a state-of-the-art approach. We also study the performance under sampling, as a way to scale our algorithm to larger data;
(4) we develop some optimality criteria for transformations, in terms of coverage and length, and an algorithm that finds the transformations under those criteria;
(5) we design synthetic data to show the scalability under different parameters including the number and the length of joinable row pairs. (6) Our code, implementing the baselines, synthetic data generator and real-world benchmarks, are made publicly available\footnote{https://github.com/arashdn/table-string-transformer}.

The rest of the paper is organized as follows: we next provide a problem definition and some baseline methods that are applicable. Our proposed method is discussed in Section~\ref{sec:approach} and its performance and complexity is analyzed in Section~\ref{sec:comlexity}. Our experimental evaluation is reported in Section~\ref{sec:experiments}, and the related work is discussed in Section~\ref{sec:related_works}. Finally, Section~\ref{sec:conclusions} concludes the paper.

\section{Problem Definition}
\label{sec:problem_def}

We will introduce some terms and notations before presenting a formal definition of the problem.

\begin{definition}[\textbf{Transformation unit}]
	\label{def:transUnit}
	A transformation unit is a function that, when applied to an input string, copies either part of the input or a constant literal to the output.
\end{definition}
\vspace*{-.15\baselineskip}

For simplicity, and when there is no confusion, \textit{transformation unit} may be referred to as \textit{unit} in the rest of this paper.
Our transformation units in this paper include the following string functions, which are commonly used in programming languages and are sufficient for many join scenarios. Clearly, the set can be expanded with additional units, and our algorithms should not be much affected as long they meet the requirements of a unit.

\begin{itemize}
	\item \textit{Substr(s,e)}  returns a substring of the input starting at position $s$ and ending at position $e$.
	
	\item \textit{Split(c, i)} splits the input using $c$ as the delimiter and returns the $i$\textsuperscript{th} string in the list.
	
	\item \textit{SplitSubstr(c, i, s, e)} splits the input using $c$ as the delimiter, takes the $i$\textsuperscript{th} string in the list, and returns the substring starting at position $s$ and ending at position $e$. This is equivalent to Split(c,i) followed by Substr(s,e).
	
	\item \textit{TwoCharSplitSubstr(c1, c2, i, s, e)} splits the input using $c1$ and $c2$ as delimiters, takes the $i$\textsuperscript{th} string in the list, and returns the substring starting at position $s$ and ending at position $e$.
	
	\item \textit{Literal(str)} returns $str$ irrespective of the input.
\end{itemize}

Our set of units includes all \textit{physical operators} in Auto-Join~\cite{autojoin} except SplitSplitSubstr(), which splits the input twice before taking a substring.

\begin{lemma}
	The two transformation units TwoCharSplitSubstr() and SplitSubstr() can express any transformation that can be expressed using the SplitSplitSubstr() of Auto-Join~\cite{autojoin}.
\end{lemma}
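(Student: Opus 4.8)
The plan is to unfold the semantics of $\mathit{SplitSplitSubstr}(c_1,c_2,i,j,s,e)$ and show that the same string can always be produced by nesting units from our set. By definition this operator first splits the input by $c_1$ and keeps the $i$-th piece $P$, then splits $P$ by $c_2$ and keeps the $j$-th piece $Q$, and finally returns $Q[s..e]$. The single observation that drives the argument is that $Q$ is a maximal substring of the input containing neither $c_1$ nor $c_2$; in particular $Q[s..e]$ is a contiguous substring of the input, so a single extraction path suffices and no concatenation of several units is needed.

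First I would give the direct construction. Extracting the $i$-th $c_1$-segment in full is itself an instance of $\mathit{SplitSubstr}$, namely $\mathit{SplitSubstr}(c_1,i,0,\bot)$, where $0$ and $\bot$ denote the first and last positions of the selected segment (positions counted from the end are part of the position language, as in Auto-Join~\cite{autojoin}). Applying $\mathit{Substr}(s,e)$ to the $j$-th $c_2$-piece of a string is, by the very definition the excerpt gives for $\mathit{SplitSubstr}$, the unit $\mathit{SplitSubstr}(c_2,j,s,e)$. Hence
\[
  \mathit{SplitSplitSubstr}(c_1,c_2,i,j,s,e)(x)=\mathit{SplitSubstr}(c_2,j,s,e)\bigl(\mathit{SplitSubstr}(c_1,i,0,\bot)(x)\bigr),
\]
a composition of two units from our set, which settles the claim whenever the transformation language permits composing units (and the definition of $\mathit{SplitSubstr}$ as ``$\mathit{Split}$ followed by $\mathit{Substr}$'' shows that it does).

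Second, I would record the cases in which a single $\mathit{TwoCharSplitSubstr}$ already suffices, since $Q$ is a token of the $\{c_1,c_2\}$-split of the input: when $i$ selects the first $c_1$-segment, the $c_2$-subsegments of that segment are exactly the leading $\{c_1,c_2\}$-tokens of the input, in the same order, so $\mathit{SplitSplitSubstr}(c_1,c_2,1,j,s,e)=\mathit{TwoCharSplitSubstr}(c_1,c_2,j,s,e)$, and symmetrically when $i$ selects the last $c_1$-segment and $j$ is counted from the right. The step I expect to be the real obstacle is the general middle-$i$ case under a concatenation-only reading of transformations: the global index of $Q$ among the $\{c_1,c_2\}$-tokens depends on how many occurrences of $c_2$ fall in the earlier $c_1$-segments, a quantity that need not be constant across rows, so no single fixed-parameter $\mathit{TwoCharSplitSubstr}$ can reproduce it. Resolving this cleanly means pinning down the transformation model: if composition is allowed, the construction above finishes the proof outright; otherwise one must verify from Auto-Join's operator semantics that $\mathit{SplitSplitSubstr}$ is only ever instantiated with $i$ pointing at the first or last segment, which reduces to the special cases just listed.
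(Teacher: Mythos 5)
There is a genuine gap, and it is the one you yourself flag at the end. Your first construction settles the claim by \emph{nesting} two units, $\mathrm{SplitSubstr}(c_2,j,s,e)$ applied to the output of $\mathrm{SplitSubstr}(c_1,i,\cdot,\cdot)$, but the paper's transformation language does not permit composition: a transformation is a sequence of units, each applied to the \emph{original} source string, whose outputs are concatenated. The remark that SplitSubstr is ``Split followed by Substr'' only says that this particular composite is packaged as a single atomic unit; it does not license building new units by composing existing ones. So the composition route is unavailable, and your proof reduces to the second, conditional argument, whose middle-$i$ case you explicitly leave open (``otherwise one must verify \ldots''), with the verification never carried out. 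As written, the proposal does not establish the lemma.

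Your second route does contain the right structural observation --- the selected piece $Q$ is a maximal block free of both $c_1$ and $c_2$, i.e.\ a token of the $\{c_1,c_2\}$-split --- and this is essentially what the paper exploits. Where you and the paper diverge is the reading of the statement. The paper proves a per-pattern equivalence: it does a case analysis on how $c_1$ and $c_2$ occur in the source ($\Sigma^* c_1 \Sigma^* c_2 \Sigma^*$, $\Sigma^* c_1 \Sigma^* c_1 \Sigma^* c_2 \Sigma^*$, $\Sigma^* c_1 \Sigma^* c_2 \Sigma^* c_1 \Sigma^*$, and the degenerate cases), and in each case the chosen piece is bounded either by the string ends, by occurrences of a single delimiter, or by a $c_1$/$c_2$ pair, hence is reproduced by Substr, by SplitSubstr with one of the two delimiters, or by TwoCharSplitSubstr ``for some values of $i$, $s$ and $e$'' --- parameters that are allowed to depend on the case. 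Under that reading, the index of $Q$ among the $\{c_1,c_2\}$-tokens is fixed by the delimiter pattern, so the obstacle you identify (the token index shifting with the number of $c_2$'s in earlier $c_1$-segments) does not arise; it only bites under the stronger requirement of a single parameterization uniform over all inputs, which the lemma does not assert. If you replace your conditional ending with a case analysis of this kind (or argue directly that, for any fixed input, $Q$ is the $k$-th token of the two-character split for some $k$, so TwoCharSplitSubstr$(c_1,c_2,k,s,e)$ matches), the argument closes.
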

\begin{proof}
	The Substr action is the same for all three units mentioned in the lemma, and the mapping of its parameters is ignored. Let $c_1$ and $c_2$ denote the two split characters of SplitSplitSubstr(). There are five possible cases: (1) Neither $c_1$ nor $c_2$ exist in the input, and both SplitSplitSubstr() and SplitSubstr() are equivalent to Substr(). 
	(2) Only $c_1$ or $c_2$ occurs in the input, and SplitSplitSubstr() is equivalent to SplitSubstr().   
	(3) $c_1$ is followed by $c_2$ (i.e., the input conforms to $\Sigma^* c_1 \Sigma^* c_2 \Sigma^*$, where $\Sigma^*$ is any sequence of zero or more characters other than $c_1$ and $c_2$), and
	SplitSplitSubstr() may select a piece of text before $c_1$, after $c_2$ or between $c_1$ and $c_2$. The first two cases are covered using SplitSubstr() with either $c_1$ or $c_2$ as the split character, and the last case is covered by TwoCharSplitSubstr($c_1$, $c_2$, i, s, e) for some values of $i$, $s$ and $e$.
	(4) The input conforms to $\Sigma^* c_1 \Sigma^* c_1 \Sigma^* c_2 \Sigma*$, and SplitSplitSubstr() may select a piece of input before the first $c_1$, between the two $c_1$, between $c_1$ and $c_2$, or after $c_2$.
	The second case can be obtained using SplitSubstr() with $c_1$ and the other cases are covered above.
	(5) The input conforms to $\Sigma^* c_1 \Sigma^* c_2 \Sigma^* c_1 \Sigma^*$, and SplitSplitSubstr() may select a piece of input before the first $c_1$, after the last $c_1$, between $c_1$ and $c_2$, or between $c_2$ and $c_1$. The first case is obtained using SplitSubstr() with $c_1$, and the second and third cases are covered by TwoCharSplitSubstr(c1, c2, i, s, e) and TwoCharSplitSubstr(c2, c1, i, s, e) respectively for some values of $i$, $s$ and $e$.
\end{proof}

\begin{definition}[\textbf{Transformation}]
	A transformation $t$ is a sequence of units $t_1,t_2,\ldots$, and
	when $t$ is applied to a text $s$, denoted as $t(s)$, it produces       
	$t_1(s).t_2(s),\ldots$, i.e. the concatenation of the outputs of the units, each applied to $s$.
	
\end{definition}

\begin{definition}[\textbf{Covering transformation set}]
	Given an input set $I$ of source and target pairs, a transformation set $T$ covers $I$ if for every input pair $(s,g) \in I$, there is a $\tau \in T$ such that $\tau(s)=g$. 
\end{definition}

The problem studied in this paper is to efficiently find transformations that make two tables equi-joinable. More specifically,
given two columns to be joined and a set of candidate row pairs from these columns~\footnote{Section~\ref{sec:end-to-end} discusses how candidate row pairs can be detected}, let $T$ be the set of possible transformations. Two columns may be joinable under one transformation from $T$ (when the mapping is more predicable) or a set of transformations. For example, phone number may be joined under one transformation whereas one may need a set of transformations to map name to email. This gives rise to two variations of the problem:
(1) there is a transformation $\tau \in T$ that covers the input or a large portion of it, and (2) the input is not covered by a single transformation but there is a set of of transformations that covers it.
For the former, we aim to find a transformation $\tau \in T$ that has the \textit{maximum coverage}. For the latter, we want to find a covering set of transformations $T_c \subseteq T$ that is concise, hence we aim for the \textit{minimal cover}.

However, with each transformation formed as a sequence of units, the number of transformations exponentially increases with the number of parameters of each unit as well as the number of units in the sequence; hence the set $T$ of possible transformations can be huge. Exhaustively searching for a maximum coverage or a minimal cover can be computationally expensive (if not intractable). Our objective is to find an efficient solution that scales well with the input size.

\section{Baselines}
\label{sec:baselines}
This section presents two baseline solutions before our approach is introduced in the next section. The first baseline is a naive approach, and the second one is a state-of-the-art method. In both approaches, it is assumed that the input is a set of source and target pairs, meaning the matching rows are identified. In  section~\ref{sec:ngram}, we discuss how those matching rows can be detected.

\subsection{Naive Approach}
A naive approach to the problem is brute force, which may be broken down into the stages (not necessarily sequential) of (1) enumerating the transformations and computing the coverage of each transformation, and (2) finding a transformation with the maximum coverage or a minimal covering set of transformations.

Each transformation is a sequence of units, and each unit can be, for example, any of the transformation units mentioned in section~\ref{sec:problem_def} with every possible parameter combination. For a pair of source and target strings, both of length $l$, the length of a transformation that can map the source to the target is bounded by $O(l)$. Hence the number of transformations in Stage 1 is exponential on the length of the strings. To compute the coverage of a transformation, one needs to apply it to all source and target pairs.
In Stage 2, finding a transformation with the largest coverage is straightforward, whereas finding a minimal covering set is NP-complete unless one resorts to a greedy approach.
The naive approach is considerably inefficient due to the large number of transformations, the time it takes to evaluate each transformation, and to find the covering sets.

\vspace*{-0.18\baselineskip}
\subsection{Auto-Join}
In a recent work by Zhu et al.~\cite{autojoin}, referred to as Auto-Join, the authors address the problem of explosion in the number of transformations by taking subsets of the input and finding a transformation for each subset. This approach hinges on the assumption that there is at least one transformation covering all pairs in each subset.
Based on this assumption, every transformation unit with all possible parameter combinations is applied to the pairs in each subset, and the units are sorted based on the length of the target text covered. Then, in a recursive process, the top unit is selected and is applied to the subset. Finally, any remaining text on the left and the right of the transformed text is treated as a new instance of the problem.
The method is recursively applied to the remaining text on the left and the right to find the best units on each side. The process continues until nothing is remaining on both sides or no transformation is found. In the former, the units from each side are attached together to form a transformation and in the latter, a rollback happens and the second best unit is selected and the search continues.

As an example, consider the tables in Figure~\ref{fig:exampleTables} and suppose this algorithm is applied to Rows 4-6 of the name columns of the two tables on the right. Ignoring the capitalization in text, we are looking for a transformation for
\textit{\{(``prus-czarnecki, andrzej'', ``a prus-czarnecki''), (``bowling, michael'', ``m bowling''), (``gosgnach, simon'', ``s gosgnach'')\}}.
A search over all transformation units reveals that
\texttt{Split(',', 1)} (i.e., split by ',' and choose the first item in the list) covers the maximum portion of the output. Applying this transformation leaves no text on the right of the expected target, and the remaining left will be
\textit{\{(``prus-czarnecki, andrzej'', ``a ''), (``bowling, michael'', ``m ''), (``gosgnach, simon'', ``s '')\}}.
Repeating the same process on the new set, both units \texttt{Literal(' ')} and \texttt{SplitSubstr('  ',2,0,1)} cover the maximum possible portion of the target and either of them can be selected. Applying \texttt{Literal(' ')} leaves no text on the right, and the remaining set on the left is:
{\{("prus-czarnecki,~andrzej", "a"), ("bowling, michael", "m"), ("gosgnach,~simon", "s")\}}. 
Finally, \texttt{SplitSubstr('  ',2,0,1)} covers all rows and the remaining text on both sides of the target will be empty, meaning the process ends. The algorithm returns with the following transformation:
{\small \texttt{<SplitSubstr('  ',2,0,1), Literal(' '), Split(',', 1)>}}.

This algorithm is based on the assumption that there is a transformation that covers all pairs in a subset. The crux of the algorithm is that a transformation can be instantly rejected when an input pair is hit that is not covered by the transformation. Since some subsets will not lead to any transformation, the authors apply their algorithm multiple times, each time on a different subset.
%
The authors do not provide much hint on the size of a subset or the number of subsets, but it is easy to see that both directly depend on the coverage of transformations that are sought. For example, consider a transformation $t$ that covers half of the input pairs and a subset of size 5. The probability that all 5 pairs in a random subset will be covered by $t$ is $0.5^5=0.03125$, and one will need 32 subsets for the expected number of subsets covered by $t$ to reach one.

\section{Our Approach}
\label{sec:approach}
The key idea behind our approach is to guide the search process by exploiting any textual evidence for joinability. With a copying relationship among joinable row pairs, any piece of text that is copied from a source to a target provides some evidence on the shape of candidate transformations. We want to put together such evidence to eliminate transformations that cannot contribute to the final solution, hence, reduce the size of the search space.

In this section, we first present our approach for efficiently navigating the large space of transformations and piecing together solutions for the problems of maximum coverage and minimal cover. We then discuss how our approach can be integrated into an end-to-end system where candidate rows for join are not known.

\subsection{Discovering transformations}
\label{sec:approach_finding_trans}
Based on the concept of copying between joinable row pairs, the basic idea of our approach is to shrink the search space by considering only the transformations that can emit the given textual evidence (i.e., common sub-sequences among source and target pairs). In what follows, we first define \textit{placeholders} as a generalization of textual evidence and to serve as a seed to generate transformations. Next, we discuss the challenges in extracting transformations from placeholders and some strategies that can be employed to further shrink the search space with no significant loss in the quality of transformations.
\vspace{-.2\baselineskip}
\begin{definition}[\textbf{Placeholder}]
	Let $T$ denote the set of all nonconstant transformation units (i.e., the output is not the same for all input).
	Given a pair of source and target texts, a placeholder is any contiguous block of text in the target that can be obtained by applying a transformation unit in $T$ to the source.
\end{definition}

The significance of placeholders is that the search for transformations can focus on parts of the target text that can be obtained from the source instead of doing a blind search and potentially reducing the search space.
While the designation of a placeholder, in general, is dependent on the set of transformation units,
when a transformation unit is defined as in Definition~\ref{def:transUnit}, 
it is easy to see that every non-constant transformation unit copies part of its input to the output. Hence, the set of placeholders includes every substring in the target, which is also a substring of the source.

\begin{figure}[t]
	\includegraphics[width=0.8\linewidth]{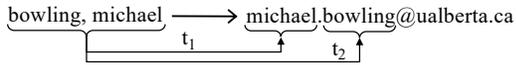}
	\centering
	\caption{Example source and target with two placeholders}
	\label{fig:transformation_example}
	
\end{figure}

For example, given the source ``bowling, michael'' and the target ``michael.bowling@ualberta.ca'', the blocks ``michael'' and ``bowling'' in the target are copied from the source and are considered placeholders. 
Let $t1$ and $t2$ be transformation units that map the source ``bowling, michael'' to placeholders ``michael'' and ``bowling'' respectively.
(as shown in Figure~\ref{fig:transformation_example}).
By finding candidate transformation units for each placeholder and concatenating them with literals, one will end up with a set of transformations that cover the input pair.
One such candidate transformation for our example is \texttt{<t1, Literal('.'), t2, Literal('@ualberta.ca')>}, where the concatenation of the outputs of the units in the sequence produces the target.

In the rest of this section, we first study some ways of bounding the search space by both grouping transformations that conform to a template or skeleton and introducing some notation of quality. Then we develop a few strategies and algorithms to efficiently enumerate the transformations to compute their coverage while keeping duplicates at bay.

\subsubsection{The space of transformations}
The main challenge in finding transformations is the search space.
The number of candidate transformations can be huge and searching the space of transformations to piece together a solution can be costly.
The number of candidate transformations increases with the number of rows, the number of placeholders per row, the length of placeholders\footnote{This is because every substring of a placeholder is also a placeholder.}, and the number of candidate transformation units for each placeholder.
Given a source and a target pair, by finding a transformation for each placeholder and concatenating them with the literals, one will end up with a transformation that covers the input pair. Nonetheless, many possible combinations of placeholders and literals  might be extracted. While some combinations may lead to transformations that cover a large number of pairs in the input, others may cover only a single row pair or very few pairs. To find a transformation with maximum coverage or a minimal covering set, one will need to evaluate all these candidate transformations.
Checking the coverage of each candidate transformation can also be costly when there are millions to billions of transformations, which was the case in some of our experiments (e.g. see Table~\ref{tab:cache}), and they need to be applied to all input rows.

Enumerating the set of transformations involves detecting combinations of placeholders. For a given input pair, based on the definition of a placeholder, each common $n$-gram among the source and the target for all values of $n$ can be considered as a placeholder. Hence all n-grams of all sizes in the target can be considered as placeholders, whereas
every block of target text that is not a substring of the source can be considered as a literal since no other transformation unit can generate it.

Each combination of placeholders and literals that generates the target of a row after a concatenation is referred to as a \textit{transformation skeleton} that fits the row. Each skeleton can give rise to many transformations.
While generating all possible skeletons is computationally expensive, a large portion of those combinations do not lead to desirable transformations and will not end up in the answer set.
We want to eliminate combinations that lead to transformations that are already covered by a \textit{better} transformations (e.g., covering more rows).


\subsubsection{Transformation fitness}
One question is if we can say a transformation better fits the input, and if a transformation can be ignored in favor of another transformation for the purpose of early filtering. Two measures of quality may be established. One measure is the transformation coverage. The more coverage a transformation has, the more general the transformation is. Consider two transformations $t1$ and $t2$ and suppose $t1$ covers every row covered by $t2$ plus at least one more. In any solution set of transformations that includes $t2$, the transformation can be replaced with $t1$ and the coverage cannot get worse (and it may get better). Hence  $t2$ may be removed in favor of $t1$.

Another measure of quality is the length of a transformation, in terms of the number of placeholders inside. A transformation with fewer placeholders is usually easier to read and may have less redundancy in capturing the structure in data.
This is also a desirable property that can reduce or avoid a possible fragmentation of placeholders.

\begin{lemma} [\textbf{Maximal-length placeholder principal}]
	\label{lemma:minimal-length}
	~\linebreak The minimum transformation length, in terms of the number of placeholders, is when the placeholders are of maximal length.
\end{lemma}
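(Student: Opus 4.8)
The plan is to argue by a contradiction / exchange argument: suppose a minimum-length transformation $t = \langle u_1, u_2, \ldots, u_k \rangle$ fitting a given row is fixed, and suppose one of its placeholders $p$ (the output block produced by some non-constant unit $u_j$) is \emph{not} of maximal length, i.e. $p$ can be extended in the target — to the left, to the right, or both — into a longer block $p'$ that is still a common substring of the source and target, hence still a valid placeholder. I would then show that $t$ can be rewritten into a transformation $t'$ that fits the same row, still produces exactly the target upon concatenation, and has \emph{no more} placeholders than $t$; iterating this rewrite until no placeholder can be extended yields a transformation in which every placeholder is maximal and whose length is at most that of $t$, which establishes the claim.

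The key steps, in order: (1) Fix a row $(s,g)$ and a fitting transformation $t$ realizing the minimum number of placeholders. (2) If every placeholder is already maximal, we are done; otherwise pick a non-maximal placeholder $p$ with producing unit $u_j$, and let $p'$ be a strictly longer placeholder extending $p$ in the target. (3) Observe that the characters by which $p'$ extends $p$ in $g$ are, in the skeleton of $t$, covered either by adjacent literal units or by adjacent placeholder units (or a mix). Replace $u_j$ by a non-constant unit producing $p'$ (such a unit exists by the definition of placeholder and the form of our transformation units, which can emit any common substring), and correspondingly \emph{shrink or delete} the neighbouring literals and absorb any neighbouring placeholders that now overlap $p'$. (4) Check that the concatenation of the rewritten sequence still equals $g$ — this is just bookkeeping on where each character of $g$ comes from — and count placeholders: deleting a literal does not change the placeholder count, and absorbing a neighbouring placeholder into $p'$ strictly decreases it, so in all cases the placeholder count does not increase. (5) Since the total target length is finite, the extension process terminates, leaving a fitting transformation with all placeholders maximal and length $\le k$; combined with minimality of $t$, the minimum length is attained with maximal-length placeholders.

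The main obstacle I expect is Step (3)–(4): making precise what ``extending a placeholder'' does to the \emph{neighbouring} units of the skeleton. Extending $p$ to $p'$ may eat into a literal on one side (easy: just shorten the literal, or drop it if it becomes empty), but it may also run into an adjacent placeholder, in which case the two placeholders must be merged into one — and one has to confirm that the merged block is itself still a common substring of source and target (it is, being a contiguous block of $g$ that we have exhibited as copyable), and that a single non-constant unit can emit it. A secondary subtlety is that, strictly, the lemma asserts the minimum is achieved \emph{when} placeholders are maximal; so besides showing ``maximal $\Rightarrow$ no longer than any fitting $t$'' one should also note that the all-maximal skeleton for a row is essentially unique up to how the inter-placeholder literals are split, so ``maximal length'' pins down the placeholder set and hence the minimal placeholder count. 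I would state this uniqueness observation explicitly and use it to close the argument cleanly.
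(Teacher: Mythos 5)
The paper never actually proves this lemma: it is stated as a ``principle,'' with only the surrounding discussion (and the cautionary example before Lemma~\ref{lemma:max-coverage}) as informal justification, so your exchange argument is being measured against an implicit intuition rather than a written proof. Your overall route --- take a minimum-placeholder transformation, extend any non-maximal placeholder and absorb what it runs into, and argue the placeholder count never increases --- is exactly the natural formalization of that intuition. However, two concrete points in your write-up do not hold up. First, the step you yourself flag as the obstacle is genuinely unresolved: when the maximal extension $p'$ of $p$ only \emph{partially} overlaps an adjacent placeholder $q$, you cannot absorb $q$ (which would strictly decrease the count); you must trim $q$ to its uncovered remainder, which keeps the count the same but creates a new non-maximal placeholder. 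Your termination argument (``the target length is finite'') then fails, because the rewrite can oscillate. For example, with source containing ``abcd'' and ``cde'' but not ``abcde'' and target ``abcde'', the maximal placeholders ``abcd'' and ``cde'' overlap, and the two-placeholder skeleton $\langle$P:``abc'', P:``de''$\rangle$ rewrites to $\langle$P:``abcd'', P:``e''$\rangle$, then to $\langle$P:``ab'', P:``cde''$\rangle$, then back again --- no step increases the count, but the process never reaches an all-maximal configuration. You need a well-founded scheme (e.g., a single left-to-right pass that fixes a maximal placeholder containing the leftmost placeholder block, trims everything to its right, and recurses on the suffix, with a monotone potential such as the sum of placeholder left endpoints) rather than unrestricted local rewriting.

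Second, the ``uniqueness'' observation you propose to close the argument with is false: because distinct maximal placeholders can overlap in the target (as in the example above), a row can admit several all-maximal skeletons with genuinely different placeholder sets, not merely different splittings of the literals. Fortunately the lemma does not need uniqueness --- it only asserts that \emph{some} minimum-length transformation uses maximal placeholders --- so you should drop that step instead of leaning on it. A further caveat worth one sentence in any rigorous version: as literally stated the claim is vacuous, since a transformation consisting only of Literal units fits any row with zero placeholders; a real proof must first pin down the admissible skeletons (e.g., placeholders must be used for target blocks that occur in the source, as in the paper's skeleton construction) before the exchange argument is meaningful. The paper does not do this either, so your proposal is filling a real gap --- but as written it does not yet close it.
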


The maximal-length placeholder principal will not necessarily lead to transformations with larger coverage though. Consider the following two rows of source and target pairs.
\miniskip
\begin{align*}
	abcdefghijklmn &\rightarrow  defg.jkb  \\
	0123456789abcd &\rightarrow  d456.9ab
\end{align*}
Now consider the following three transformations:\\
\texttt{
	\small
	$t1$: <Substr(4,7), Literal('.'), Substr(9,10), Literal('b')>\\
	$t2$: <Literal('d'), Substr(5,7), Literal('.'), Substr(9,11)>\\
	$t3$: <Literal('d'), Substr(5,7), Literal('.'), Substr(9,10), Literal('b')>
}\\
The first row is covered by $t1$, the second row is covered by $t2$, and both rows are covered by $t3$. All the placeholders in $t1$ and $t2$ are of maximal length whereas the placeholders in $t3$ are not of maximal length. $t3$  consists of 5 units instead of 4 in both $t1$ and $t2$.

%


\begin{lemma} [\textbf{Maximum transformation coverage principal}]
	\label{lemma:max-coverage}
	The maximum transformation coverage may not be reached with maximal-length placeholders and it can be reached with any placeholder length.
\end{lemma}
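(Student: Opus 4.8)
The plan is to prove the statement by explicit input instances, since both halves are negative/existential rather than structural. Concretely, I need (a) one input on which the maximum-coverage transformation uses a non-maximal placeholder, so that restricting the search to maximal-length placeholders strictly lowers the attainable coverage, and (b) for every length $\ell\ge 1$ an input whose maximum-coverage transformation uses placeholders of length exactly $\ell$, together with the easy remark that maximal-length placeholders also occur in some optimal transformations (e.g.\ whenever a single maximal-placeholder transformation already fits every row). Items (a) and (b) together say that no bound, lower or upper, on placeholder length can be assumed once coverage is the objective, which is precisely the ``any placeholder length'' clause. Note this pulls against Lemma~\ref{lemma:minimal-length}: maximal length is forced for minimizing the \emph{number} of placeholders, but it buys nothing for coverage.

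For (a) I would reuse the two-row instance displayed just before the lemma, $abcdefghijklmn\to defg.jkb$ and $0123456789abcd\to d456.9ab$, whose maximum coverage is $2$, witnessed by $t3$. It then remains to show that no transformation all of whose placeholders are maximal fits both rows. The key point is that the maximal-placeholder decompositions of the two targets are structurally incompatible: the first target's maximal decomposition begins with the length-$4$ placeholder $defg$, whereas the second target's maximal decomposition begins with a length-$1$ block followed by the length-$3$ placeholder $456$; the portions before the literal ``.'' therefore cannot be aligned into a common skeleton using only maximal placeholders. Consequently any skeleton that fits both rows must place a \emph{shorter}, non-maximal placeholder at one of these positions (this is exactly what $t3$ does, using $efg$ in the first row and $9a$ in the second), and a short check over the finitely many skeletons common to the two rows confirms none of them is built solely from maximal placeholders. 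Hence restricting to maximal-length placeholders caps the coverage at $1<2$, proving the first half.

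For (b) I would parameterize the same effect. Fix $\ell\ge 1$, pick a core word $w$ with $|w|=\ell$, and build at least two source/target pairs so that: $w$ is a placeholder in every row; for each row, every substring common to that row's target and its own source that strictly contains $w$ fails to be a substring of some other source; and every competing skeleton using placeholders shorter than, or disjoint from, $w$ fails to fit all rows (or is dominated in coverage). Then any maximum-coverage transformation is forced to align $w$ --- a placeholder of length exactly $\ell$ --- with the corresponding position. By additionally placing $w$ inside one of the sources between two letters that let it extend there, the same instance makes the used placeholder non-maximal, so (a) in fact holds for every $\ell$. As in the earlier case-analysis proof, this reduces to fixing the boundary letters so that the relevant longest common substrings are unique and then enumerating a short list of skeletons; I would only record those choices and leave the enumeration implicit.

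The main obstacle, in both parts, is the universal quantifier hidden in ``the maximum coverage is \emph{not} reached with maximal-length placeholders'': exhibiting one good non-maximal transformation is not enough --- I must rule out \emph{every} transformation assembled from maximal placeholders, hence reason about all skeletons that fit the rows, not just the obvious ones. The instances are therefore engineered so that the placeholders shared by all rows are pinned down exactly, via mutually distinct, carefully chosen boundary characters, which collapses the space of common skeletons to a short, hand-checkable list. Getting this uniqueness right --- and making sure no incidental re-bracketing or shorter-placeholder skeleton attains the same coverage --- is the delicate step; everything after it is routine.
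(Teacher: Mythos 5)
Your proposal is correct and follows essentially the same route as the paper: part (a) reuses the very two-row Substr example preceding the lemma (the paper simply asserts it, while you more carefully note and discharge the hidden universal quantifier over all maximal-placeholder skeletons), and part (b) is the same parametric-family idea, which the paper instantiates concretely as a Split-based template (a per-row unique separator whose maximal placeholder yields coverage one, plus a common separator inside a \emph{nonCommonText} block whose length is varied to force any placeholder length). The only difference is one of concreteness: the paper writes down that explicit source/target format, whereas you leave the analogous ``core word $w$ of length $\ell$'' instance to be engineered, which is acceptable but would need to be spelled out to be complete.
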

\begin{proof}
	The example given above shows, using Substr as a unit, that the maximal-length placeholder may not lead to the maximum coverage. Here we show using Split that the maximum transformation coverage can be reached at any placeholder length.
	Consider an input where each source row has a unique separator, and the target is the text on the right side of this separator. Since a split based on this unique separator generates the whole target, the transformation unit covers a maximal-length placeholder. On the other hand, each separator is unique to a row, and a split based on the row separator will only cover the same row and has a coverage of one. Now suppose the text on the right of the unique separator has some text that is common between multiple rows and can provide some common separators.
	Let the source be formatted as
	\textit{uniqueSep,nonCommonText, commonSep,nonCommonSuffix} and the target be \textit{nonCommonText, commonSep, nonCommonSuffix}. Any character in \textit{commonSep} may be used as a common separator for multiple rows. By changing the length of \textit{nonCommonText}, the maximum transformation coverage can be obtained at any placeholder length.
\end{proof}

As an example, consider the following two rows, showing the source on the left and the taget on the right of the arrows.
\begin{align*}
	12345sabcdefg &\rightarrow abcdefg \\
	67890taxxxx &\rightarrow axxxx
\end{align*}
Consider the two transformations \texttt{split('s',2)}, which covers the first row, and \texttt{split('t',2)}, which covers the second row. They both cover the maximal placeholders but have a coverage of only one row. It is easy to see that
\texttt{<Literal('a'),Split('a',2)>} has a coverage of two but the units are not covering maximal-length placeholders. For the same reason, the transformation length has increased.

\noindent
\subsubsection{Maximal-length placeholders as the backbone of transformations}
Maximal length placeholders may be considered as the backbone of our transformations in that every desirable transformation either uses or is linked to maximal-length placeholders.
One way to reduce the size of the search space is to limit the placeholders to only those that are of maximal length. This not only reduces the number of candidate transformations significantly but also reduces the length of the transformations. A downside is that the maximal length placeholders may not lead to the maximum coverage (as shown in Lemma~\ref{lemma:max-coverage}) and some of those transformations may not be considered as candidates, but this is not difficult to address.

\begin{lemma}
	\label{lemma:placeholder_split}
	Given a transformation $t$ with all placeholders of a maximal-length, let $R_t$ denote the set of all input rows covered by $t$. Let $t^\prime$ be a transformation obtained from $t$ by replacing a maximal length placeholder $P$ with either two placeholders $P1$ and $P2$ and a literal $L$ or a placeholder $P1$ and a literal $L$, and denote the set of all rows covered by $t^\prime$ with $R_{t^\prime}$. $R_t$ can be a proper subset of $R_{t^\prime}$ (meaning $R_{t^\prime}$ can have more coverage than $R_t$) under two cases:\\
	(1) $P$ is broken to $P1\ L\ P2$, and $L$ is a common separator of $P1$ and $P2$ in all rows in $R_{t^\prime}$.\\
	(2) $P$ is broken to $P1\ L$ or $L\ P1$, and $P1$ is a placeholder in $R_{t^\prime} - R_t$ but $P$ is not a placeholder in $R_{t^\prime} - R_t$.
\end{lemma}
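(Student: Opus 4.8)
The statement asks us to characterize exactly the two ways in which breaking a maximal-length placeholder $P$ into smaller pieces can \emph{increase} the row coverage. My plan is to argue by analyzing what $t'$ does on a row $r$, splitting into the case where $r \in R_t$ and the case where $r \notin R_t$, and showing that in the ``gain'' situation the stated structural conditions must hold.

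First I would establish the ``easy'' direction of containment behavior: for any row $r \in R_t$, the transformation $t$ emits the text of $P$ as a contiguous block in the output, so in particular the portion of the target for $r$ aligned with $P$ is a substring of the source for $r$. When we replace $P$ by $P_1\,L\,P_2$ (or $P_1\,L$ / $L\,P_1$), the new transformation $t'$ will still cover $r$ precisely when the subunits emitting $P_1$ and $P_2$ evaluate on $r$'s source to the correct sub-blocks and $L$ is the literal text sitting between them in $r$'s target. When the subunits are chosen to be Substr-style restrictions of the original placeholder's unit, this holds for every $r \in R_t$, so $R_t \subseteq R_{t'}$; this is the routine part and I would dispatch it quickly. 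The substance is the reverse: to get $R_t \subsetneq R_{t'}$ we need some $r \in R_{t'} - R_t$.

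Next I would take such a witness row $r \in R_{t'} - R_t$ and ask \emph{why} $t$ fails on $r$ while $t'$ succeeds. Since $t$ and $t'$ agree on all units outside $P$, the failure of $t$ on $r$ must be localized to the placeholder $P$: the single unit of $t$ emitting $P$ does not produce the required target block on $r$'s source. Meanwhile $t'$ succeeds, so the replacement pieces do produce the required target. I would split on the form of the replacement. In the split-into-three case ($P \to P_1\,L\,P_2$), the only way the smaller pieces can succeed where the monolithic placeholder failed is that the two pieces $P_1$ and $P_2$ are extracted separately and reassembled with the literal $L$ between them — which is exactly the scenario in the proof of Lemma~\ref{lemma:max-coverage}, where $L$ acts as a \emph{common separator} of $P_1$ and $P_2$ across the rows of $R_{t'}$ (otherwise a single maximal-length placeholder would already have captured $P_1\,L\,P_2$ as one block on those rows, contradicting failure of $t$). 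This yields condition (1). In the split-into-two case ($P \to P_1\,L$ or $L\,P_1$), the literal $L$ is now emitted unconditionally, so the only new content demanded of the source is the shorter block $P_1$; thus $t'$ can cover a new row $r$ exactly when $P_1$ is still a placeholder on $r$ (i.e., a substring of $r$'s source) even though the longer $P$ is not. That is condition (2).

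The main obstacle I anticipate is making the phrase ``$P$ is broken to $P_1\,L\,P_2$ and $L$ is a common separator'' precise enough to be a genuine \emph{necessary} condition rather than just a sufficient one — in other words, ruling out exotic ways that a maximal-length placeholder could fail on a row yet have its pieces succeed for reasons unrelated to a separator (for instance, $P_1$ or $P_2$ being obtained from a different occurrence in the source, or via a different unit type such as Split versus Substr). I would handle this by leaning on the maximality hypothesis on $P$ together with the classification of transformation units from Section~\ref{sec:problem_def}: because every non-constant unit copies a contiguous block of the input, and because $P$ was chosen maximal, any strictly longer common block containing $P_1 L P_2$ would already have been the placeholder; hence on a row where $t$ fails, the block $P_1 L P_2$ is \emph{not} contiguous-and-matching in the source, which forces $P_1$ and $P_2$ to be picked up independently — and the mechanism that lets a unit re-locate the start of $P_2$ after $L$ is precisely $L$ serving as a (possibly multi-row) separator. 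I would also note explicitly that the lemma only claims these are the situations under which a proper-subset gain \emph{can} occur (it says ``can be'', not ``is''), so I do not need to show the gain always materializes — only that outside cases (1) and (2) no new row is ever added, which follows from the localization-to-$P$ argument above.
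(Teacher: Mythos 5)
There is a genuine mismatch here, both with what the paper actually does and in the internal logic of your argument. The paper gives no formal proof of this lemma at all: its entire treatment is the intuition paragraph following the statement (case (1) is explained by pointing back to the Split construction in the proof of Lemma~\ref{lemma:max-coverage}, where a common separator falls strictly inside a maximal-length placeholder) plus the worked $t1$/$t3$ example before that lemma for case (2). In other words, the lemma's content, as the paper uses it, is the \emph{possibility} direction --- exhibiting that breaking a maximal-length placeholder can strictly enlarge coverage in exactly these two situations --- and that is established constructively. You explicitly waive that direction (``I do not need to show the gain always materializes'') and instead try to prove a necessity claim (``outside cases (1) and (2) no new row is ever added''), which is a stronger characterization than the ``can be'' statement and is not what the paper argues.

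Your necessity argument also has concrete gaps. First, the ``routine'' containment $R_t \subseteq R_{t'}$ is false in general: if the unit realizing $P$ emits row-dependent text (e.g., a Split), then inserting a fixed Literal $L$ inside it makes $t'$ fail on every row of $R_t$ whose $P$-block does not contain $L$ at that position, so the containment holds only for suitably chosen replacements, and the lemma never asserts it. Second, the key step for case (1) --- ``the mechanism that lets a unit re-locate the start of $P_2$ after $L$ is precisely $L$ serving as a separator'' --- is a non sequitur for the unit set of Section~\ref{sec:problem_def}: on a new row $r \in R_{t'}-R_t$, the units realizing $P1$ and $P2$ can be Split or SplitSubstr keyed on characters of $r$'s source that have nothing to do with $L$, and $L$, being a Literal, need not occur in $r$'s source at all; moreover, maximality of $P$ is a property determined on the rows that generated $t$ and constrains nothing about how $P1$ and $P2$ are matched in $r$. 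So ``$L$ is a common separator of $P1$ and $P2$ in all rows in $R_{t'}$'' does not follow from your argument. Likewise, the ``exactly when'' in your case (2) overreaches: $P1$ being a substring of $r$'s source is necessary for a non-constant unit to emit it but not sufficient (the specific unit and the remaining units of $t'$ must also succeed), and a gain can occur even when $P$ is still a substring of the new row's source but the particular unit of $t$ fails, as in the paper's split example following Lemma~\ref{lemma:max-coverage}. What the paper needs, and supplies, are the two constructive scenarios; a tight ``only if'' characterization would require a substantially more careful case analysis over the unit types than what you sketch.
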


The intuition behind the first case is that a common separator falls inside a placeholder, and the maximal length placeholders cannot use the separator, whereas the non-maximal length placeholders can, and this gives rise to a better coverage, as discussed in the proof of Lemma~\ref{lemma:max-coverage}. 
One way to address this is to break maximal length placeholders based on separators that are expected to be common between multiple rows, and consider both the placeholders before and after the split as candidate placeholders.
Our experiments show that using only space and punctuations as possible common separators resolves all cases we have seen in our real datasets.
An example of our second case is shown in the example given before Lemma~\ref{lemma:max-coverage}.
The second case may be resolved by possibly combining transformations that are similar but cover different input sets if the combination provides a better coverage.

Given an input row, one may first obtain all maximal-length placeholders and their combinations with literals that cover the row. Then every maximal-length placeholder may be tokenized using common split characters in the natural language, such as punctuations and spaces, resulting in new skeletons (of placeholders and literals). As a result, we will have a set of skeletons that cover the row, and each skeleton can be used to generate a set of transformations (as discussed next).
For example, for the input pair (``Victor Robbie Kasumba'', ``Victor R. Kasumba''), the following skeleton set will  be produced:\\
\texttt{
	\small  
	\{<(P: 'Victor R'), (L: '. '), (P: 'Kasumba')>,\\
	\space\space <(P: 'Victor'), (L: ' '), (P: 'R'), (L: '. '), (P: 'Kasumba')>,\\
	\space\space <(L: 'Victor R. Kasumba')>\},
}\\     
where \texttt{P} indicates a placeholder and \texttt{L} identifies a literal.

\noindent
\subsubsection{Generating the transformations}
Generating candidate transformations from skeletons that cover a row is straightforward. Each placeholder can be replaced with a set of transformation units that map the source text to the text marked with the placeholder.
While a blind search for the parameters of a transformation unit can be computationally expensive, when the expected output (i.e., the placeholder text) and the matching part in the source are specified, the search is significantly faster. Consider a placeholder with text \textit{txt} and a matching source text that starts at position \textit{s} and ends at position \textit{e}. With the transformation units discussed in Section~\ref{sec:problem_def}, the placeholder can be replaced with 
(1) \textit{Substr(s,e)}, 
(2) \textit{Split(c,i)} where $c$ is a character at positions $s-1$ or $e+1$ of the source, $c$ does not occur in \textit{txt} and $i$ is an index that gives \textit{txt} after splitting the source, 
(3) \textit{SplitSubstr(c,i,s,e)} where $c$ is any character in the source that does not occur in \textit{txt} and $i$, $s$ and $t$ are possible indexes that generate the placeholder text,
(4) \textit{TwoCharSplitSubstr(c1,c2,i,s,e)} where $c1$ and $c2$ are any characters in the source that do not occur in \textit{txt} and $i$, $s$ and $t$ are possible indexes that generate the placeholder text, and
(5) \textit{Literal(txt)}. It can be noted that each placeholder may also be replaced with a literal, and this can be useful in cases where a constant in the target text occurs in the source by chance.

With each placeholder in a skeleton replaced with a set of candidate transformation units, a transformation is obtained by selecting a candidate unit for each placeholder.
Given an input row with a skeleton,
the Cartesian product of the candidate sets for all placeholders will give the set of candidate transformations. For example, consider the row (``Victor Robbie Kasumba'', ``Victor R. Kasumba'') and the skeleton $\text{\space} [$(P1: ``Victor R''), (L1: ``. ''), (P2: ``Kasumba'')$]$, and suppose the units that replace each placeholder are limited to Substr and Literal. The candidate units for $P1$ are 
\textit{\{Literal(``Victor R''), Substr(0,7)\}} and those for $P2$ are
\textit{\{Literal(``Kasumba''), Substr(14,21)\}}.
The Cartesian product of the extracted units would provide the following set of transformations:\\
\texttt{
	\small
	\{
	<Literal('Victor R'), Literal('. '),  Literal('Kasumba')>,\\
	<Literal('Victor R'), Literal('. '), Substr(14,21)>,\\
	<Substr(0,7), Literal('. '),  Literal('Kasumba')>,\\
	<Substr(0,7), Literal('. '),Substr(14,21)>
	\}
}

\subsubsection{Removing duplicates and computing the coverage}
The transformation generation phase can produce many candidates, and a large portion of them are duplicates.
Generally, the same transformation can be generated by multiple rows, and there is no need to keep more than one copy.
With the transformations stored in a hash set, duplicate transformations can be easily removed at the generation phase.

To compute the coverage of a transformation, one needs to apply it to all input rows and keep a record of the rows covered. This can be an expensive process when there is a huge set of transformations and they all need to be applied to all input rows. We address this problem by utilizing an eager filtering before each transformation is applied to a row. The filtering is at the level of transformation units and is very effective.
Consider a transformation $t$ that consists of a set of units and a row pair \textit{(src,tgt)}. $t$ cannot cover the row if the output of any of its units is not part of \textit{tgt}.
To speed up the computation, one may keep for each row a hash set of units that cannot be part of any transformation that covers the row. Before applying a transformation to a row, one can check if any of its units are present in the set of non-covering units of the row in $O(1)$ time using the row hash set. 
If a unit is present in the non-covering units of the row, the transformation can simply be ignored. Considering that the candidate set of transformations is a Cartesian product of candidate units, many units are repeated among the transformations and, as a result, such a filtering is very effective.

\subsubsection{Piecing together a final transformation set}
\label{sec:set_cover}
In the process of computing the coverage of transformations in the previous step, it is easy to keep track of the transformation(s) with the maximum coverage, or generally top $k$ transformations with the largest coverage.

The problem of finding a minimal covering set of transformations is the classic set cover problem, which is NP-complete. A greedy approach to the problem is to select in each step a transformation that covers the largest number of input rows that are not covered~\cite{clrs}. The greedy algorithm has an approximation ratio bound of $H(n) = \sum_{i=1}^n 1/i \leq ln(n) +1$, where $n$ is the maximum transformation coverage.

\subsection{An end-to-end join algorithm}
\label{sec:end-to-end}
Even though the focus of this paper is on finding a set of transformations that make a source column equi-joinable with a target column, in an end-to-end join algorithm, one will need to find joinable row pairs first. Also, two tables may join under not one, but multiple sets of transformations and an end-to-end join may be treated as a human-in-the-loop process.

\subsubsection{Finding joinable pairs}
\label{sec:ngram}
Joinable pairs often represent the same entities but may be formatted or described differently. The algorithm discussed in Section~\ref{sec:approach_finding_trans} can find a mapping to transform one formatting to another formatting, but it is assumed that the joinable pairs are given. Analogous to training data in machine learning, the joinable pairs may be tagged in advance and be provided as input. The tagged row pairs can be a small subset of the row pairs being joined and still represent the mapping relationships.

An alternative is to automatically identify row pairs in the source and target columns that may join. Several approaches have been proposed to address this problem in the literature based on some form of string similarity (e.g., edit distance, fuzzy token matching)~\cite{auto-em,MassJoin} or pre-trained models~\cite{fastjoin}.
Since our transformation units are in the form of string operations, finding semantically joinable rows or those labeled as relevant using a knowledge base but not syntactically similar are less beneficial and may even adversely affect the process of finding transformations. As a result, we employ an n-gram matching method that is able to retrieve joinable row pairs based on their textual similarities.

Since placeholders are the backbone of our transformations, joinable row pairs are expected to have some n-grams in common. A simple approach is to consider the joinable rows as those which have at least one n-gram in common. However, using one common n-gram as the join indicator may retrieve many false positives due to the presence of stop words, common prefixes (e.g., ``Dr.'', ``Professor'', ``Gov.''), common suffixes, etc.
For example, in Figure~\ref{fig:exampleTables}, if a person name contains ``albert'', it will match all email addresses in our course-contact table. To overcome this, we want to be more selective in our choice of n-grams and possibly find some entity descriptions, in terms of n-grams, that are maybe unique to the entity. In the spirit of Inverse Document Frequency (IDF) in IR, we define Inverse Row Frequency (IRF) for a token or an n-gram $t$ in column $c$ as follows:
\begin{equation}
	IRF(t, c) = \frac{1}{\text{Number of rows in } c \text{ that contain } t}.
\end{equation}
We define the representative score (Rscore) of a token or an n-gram $t$ that appears in both source column $SC$ and a target column $TC$ as
\vspace{-0.25\baselineskip}
\begin{equation}
	Rscore(t) = IRF(t, SC)  .  IRF(t, TC).
\end{equation}
\vspace{-0.1\baselineskip}
Although the scoring function is symmetric, our algorithm for finding joinable row pairs is not symmetric and distinguishes between source and target columns. When the source and target columns are not specified in advance, we may tag the more informative column with more description as the source column. Since our approach is based on textual descriptions, a simple approach is to consider the column with longer descriptions on average as the more informative and, accordingly, the source column.

With the source and target columns tagged, we want to find some representative n-grams for each row in the source. Since a single n-gram size does not work for all rows, we select for each source row and each n-gram size, for $n_0 \leq n \leq n_{max}$, an n-gram with the largest Rscore as the representative n-gram of that size. For a given source row $s$, we say a target row $t$ is a potential candidate for join if $t$ contains at least one representative n-gram of $s$.
A join between source and target columns can be one-to-one, one-to-many, many-to-one and many-to-many. Unless this relationship is explicitly specified, we assume the relationships can be many-to-many, and
if a row in the source column matches more than one row in the target column, all matching pairs are considered as separate candidate pairs. 
Algorithm \ref{alg:row_matching} describes this process in details.  

\tinyskip
\begin{algorithm}
	\SetKw{In}{in}
	\SetKw{To}{to}
	\SetKw{From}{from}
	\SetKwInOut{Input}{input}
	\SetKwInOut{Output}{output}
	\SetKwData{row}{row}
	\SetKwData{token}{token}
	
	\Input{Source column $SC$ and  target column $TC$}
	\Output{Pairs of candidate joinable rows}
	
	\BlankLine
	\For {$n$  \From $n_0$ \To $n_{max}$}
	{
			\ForEach{\row $r$  \In $SC$}
			{
					T = all n-grams (of length $n$) in $r$\\
					rep = $\underset{t \in T}{\mathrm{argmax}} \text{\space}Rscore (t)$\\
					C = all rows in $TC$ that contain rep.\\
					\ForEach{\row $r'$  \In $C$}
					{
							\lIf{$(r,r')$ $\notin$ output}{add pair $(r,r')$ to  output}
							
						}
				}
		}
	\caption{Pseudocode for finding candidate joinable pairs}
	\label{alg:row_matching}
\end{algorithm}
\shrink

To speed up the process of finding joinable row pairs, we build an inverted index for n-grams that appear in either the source or the target columns. For a fast access, the inverted index is organized as a hash with every n-gram of size $n_0 \leq n \leq n_{max}$ as a key and the row ids where the n-gram appears as a data value. For a source row of length $L$, the set of joinable rows can be obtained in $O(L)$ lookup.

\section{Performance Analysis}
\label{sec:comlexity}
A challenge in transforming tables for joinability is the large search space and the time it takes to search this space. This section analyzes the running time of our algorithm in terms of the input size, the length of the rows being joined, the length of transformations in terms of the number of placeholders, and the number of units. We compare the running time to that of Auto-Join, a state-of-the-art method in the literature.

\subsection{Running time of our approach}
Our approach for transforming a table for joinability consists of the following steps: (1) finding joinable row pairs, (2) detecting placeholders, (3) constructing transformation skeletons, (4) generating transformations, and (5) finding the coverage and compiling a solution.

With an inverted index on all n-grams in source and target columns, finding joinable row pairs is straightforward. For each source row, the representative n-grams can be selected and all target rows that contain those n-grams can be looked up in $O(1)$ time, assuming the inverted index is organized as a hash with O(1) access time and that each source row joins with a constant number of target rows.

For a given source and target pair, each n-gram in the target can be a placeholder. If $l$ denotes the length of a row (either source or target), the number of maximal-length placeholders is at most $l$. Each placeholder in the target can have at most $l$ matches in the source. Hence, all placeholders and their matches can be found in $O(l^2)$ time. However, the number of placeholders is generally much smaller than $l$, especially when we are interested in maximal-length placeholders only. Let $p \leq l$ denote the number of placeholders per transformation. Each placeholder can have up to $l$ matches in the source, giving rise to $pl$ placeholder combinations per row.

Each n-gram in the target that appears in the source can be considered both a placeholder and a literal, to capture the cases where a literal has a match in the source by chance. Hence the number of transformation skeletons is bounded by $2^ppl$.
Our experiments on real-world data show that only in very few cases a placeholder will match with more than one token in the input, and the number of skeletons  per row is bounded by $2^p$.

Let $u$ denote the number of available transformation units ($u$ is 5 for the set of units discussed in Section~\ref{sec:approach_finding_trans}). For some units, an expected output is generated with a single assignment of the parameters (e.g., Literal), while for others, it can be generated with multiple assignments. With the number of assignments per unit bounded by $l$, each placeholder can be replaced with $ul$ instances of units and parameter assignments.
Each skeleton can have up to $p$ placeholders, and each placeholder can be replaced by $p.u.l$ instances of units and parameter assignments.              
Hence, the number of candidate transformations per skeleton is $(ul)^p$ (i.e., the size of the Cartesian product). 
With $2^p$ skeletons per row, all transformations of a row can be found in $O((2ul)^p)$ time.

To compute the coverage of a transformation, one will need to apply it to all rows. On an input with $n$ rows, this can be done in $O(n)$ time per transformation, assuming $p$ is a constant.
Our greedy algorithm for finding a minimal covering set runs in linear time in the sum of the coverage of the transformations. With the number of transformations bounded by $n(2ul)^p$ and the coverage of each transformation bounded by $n$, a minimal covering set can be obtained in $O(n^2(2ul)^p)$ time.
With $u$ and $p$ treated as  constants, our algorithm runs in $O(n^2l^p)$ time.

It is worth mentioning that our cost analysis does not take into account our filtering and duplicate removal, which play an important role in reducing the size of the search space and speeding up our algorithm. We evaluate the effectiveness of those pruning steps in our experiments, as reported in Section~\ref{sec:pruning_performance}. We also assume that all of the transformation units have a parameter space of $l$. However, the parameter space for many units (such as Literal, Split, and Substr), when replacing a specific placeholder, is $O(1)$, and this can have a significant impact on the overall complexity of the method. For example, with this assumption, each placeholder can be replaced with $O(u)$ instances of transformations and parameters, and our algorithm runs in $O(n^2)$ time.

\subsection{Compared to Auto-Join}
To better assess the running time of our approach and to have a baseline for comparison, we provide a quantitative analysis of Auto-Join.

\noindent
\subsubsection*{Running time of Auto-Join}
Auto-Join selects a subset of the input and aims at finding a transformation that covers all rows in the subset. Let $r$ denote the size of this subset. The algorithm for finding a transformation is a recursive process where at each step, one unit with a parameter assignment is selected and a transformation around that unit is built within the recursive calls (if possible). Let $u$ denote the number of transformation units, and $z$ be the number of parameters per unit (e.g., z is two for Substr and five for TwoCharSplitSubstr). On an input pair of length $l$ (i.e., either source or target length), there are $ul^z$ choices of units and their parameter assignments. For each choice, one needs to apply the unit to all rows in the subset and detect for each row the part of the target that is covered. This can be done in $O(rul^{z+1})$ time.
All units that cover the subset are candidates for forming a transformation. These candidates are sorted based on the average length of the target they cover, and are considered in that order.
Sorting can be done via a heap with a very small overhead and is ignored in our calculation.

In each step of the recursion, a unit with an assignment of its parameters is selected (out of $ul^{z}$ choices) and the algorithm is recursively called to transform the remaining target text on both the left and the right of the matching text.        
If the transformation does not cover all rows of the subset, the algorithm backtracks and selects the next unit, in the sorted order, until a transformation is found or all units are tried.                  

Each unit that covers the input must transform at least one character in the target, leaving $l-1$ characters to be covered in the next recursive call. If the character that is covered always falls in the middle, the process is called for the left and the right portions of the target, each of length $l/2$, and the cost can be written as
$
C(l)= rul^{z+1} + 2ul^{z} C(\frac{l}{2}).
$
The number of recursive calls is bounded by $log l$ and the number of placeholders in a transformation, denoted by $p$. With $p \leq log l$ the algorithm runs in 
$O(2^{p -1} u^p l^{z p} 2^{-(p + 1)/2}(rl+2))$ time.
Assuming that both the number of placeholders and $u$ are constants, the algorithm runs in $O(l^{z p +1}r)$ time.
If the character that is covered always falls at the end, the process is called for the remaining $l-1$ characters, and the cost can be written as
$
C(l)= rul^{z+1} + ul^{z} C(l-1).
$
The number of recursive calls is bounded by $l$ and the algorithm runs in    
$O(u^{l} (l!)^{z} (rl+1))$ time. Again if the length of the recursion is bounded by the number of placeholders $p$ and that both the number of placeholders and $u$ are treated as constants, the algorithm runs in $O(l^{z p +1} r)$ time.

A subset selected by Auto-Join may not yield a transformation, or the transformation obtained may not have high coverage on the whole input. To avoid this, Auto-Join has to run on multiple subsets, and this should be considered in the cost estimate.

\noindent
\subsubsection*{A comparison}
Some observations can be made in comparing our algorithm with Auto-Join. First,
the running time of both our approach and Auto-Join are affected by the input length, but the running time of Auto-Join grows at a much faster pace with the exponent $zp+1$ instead of $p$ in our approach. 
Second, our analysis gives the worst case running time, and this case commonly happens in Auto-Join simply because one single noisy pair in the subset will push Auto-Join to search the entire parameter space, while the worst case is extremely rare in our approach. 
Third, our approach uses pruning strategies that significantly boost the performance. This is not shown in our cost analysis but is discussed in our experiments.
Finally, our algorithm runs once on the whole input, whereas Auto-Join runs multiple times on different subsets and our cost estimates for Auto-Join should be multiplied by the number of subsets tried.

\subsection{Performance under sampling}	\label{sec:sampling}
With our algorithm being quadratic on the input size,
one way to scale our algorithm to large input sizes while keeping the running time under control is sampling. A question is how small a sample can be and what transformations are discovered or missed under sampling.

Consider a transformation $t$ and a random sample of size $s$, and let $q$ denote the coverage of $t$ in terms of the fraction of input that is covered. In other words, $q$ is the probability that an arbitrary input row pair is covered by $t$. If $n$ denotes the number of input pairs, the probability that $t$ does not cover any row in the sample is $P_0 = (1 - q)^s$, and $1-P_0$ gives the probability that $t$ covers at least one row.
However, being covered by only one row in the sample is not sufficient to discover a transformation. A transformation that contains only a literal will cover a row, but no other rows, and one such transformation will be less useful. We need at least two rows in the sample to cover $t$. 
The probability that $t$ covers only one input row in the sample is
\vspace*{-0.2\baselineskip}
\begin{equation*}
	P_1 = {s\choose 1} \text{\space} q  \text{\space}  (1-q)^{(s-1)},
\end{equation*}
\vspace*{-0.1\baselineskip}
and $1 - P_0 - P_1$ gives the probability that at least two rows in the sample support $t$.
It is easy to see that even a relatively small sample provides enough data for our approach to discover a transformation. 
For instance, consider a relatively large input and a transformation that is covered by at least $5\%$ of the rows. In a sample of size 100, the probability that the transformation is discovered by our approach is $0.96$, which indicates that even a transformation with such a low coverage is discovered using a small sample. 

For a comparison, Auto-Join also does sampling with its selection of subsets, but its sampling is slightly different. Auto-Join requires all rows in a subset to be covered by a single transformation, and the probability of a subset being covered by a single transformation is $q^s$. 
This probability takes its maximum when the sample size is $1$, but
selecting $s=1$ can end up choosing a literal as the transformation. Hence at least two rows are required in a sample. 
Auto-Join also takes multiple subsets to increase the chance of discovering a transformation in at least one subset.
With $k$ different subsets, the expected number of subsets that cover the transformation is $k q^s$.
Consider the same example where a transformation that is covered by at least $5\%$ of the rows is sought. To have a subset that covers the transformation (i.e., an expectation of one), Auto-Join will need at least 400 different subsets.

\section{Experiments}
\label{sec:experiments}
This section reports an experimental evaluation of our algorithms and pruning strategies on both real and synthetic data and under different parameter settings.

\subsection{Dataset}
\label{sec:dataset}
Our evaluation is conducted on three real datasets: (1) a set of web tables with joinable tables paired up, (2) a collection of table pairs containing examples of common data cleaning tasks faced by spreadsheet users, and (3) open government data joined with data from non-government sources. We also evaluate our work using synthetic data where tables are generated for joinability under different parameter settings.

\noindent	
\textbf{Web dataset}
The web dataset is a benchmark introduced in Zhu et al.~\cite{autojoin}. The tables were collected by sampling table-content queries from a query log (e.g. ``list of California governors''), searching Google Fusion tables for those queries and selecting pairs of tables from the search results where data is formatted differently but the tables are joinable under some transformations.
The dataset consists of 31 web table pairs, covering 17 various topics. Each table has on average 92.13 rows, and the average length of a join entry is 31 characters. This is considered a difficult benchmark due to inconsistencies in data and differences in entity representation that may not be resolved using transformations.

\noindent
\textbf{Spreadsheet dataset}
This dataset, published in Syntax-Guided Synthesis Competition (SyGuS-Comp) 2016~\cite{sygus2016}, includes
the public benchmarks of both FlashFill~\cite{FlashFill,FlashFill2} and BlinkFill~\cite{BlinkFill}. The dataset contains 108 table pairs, collected from Microsoft Excel product team and help forums, and has many examples of common data cleaning tasks of spreadsheet users. Each table has on average 34.43 rows. 

\noindent	
\textbf{Open Governmental Dataset}
This dataset includes table pairs where one table is obtained from open government data and another table from non-government sources. Our government data included 
close to 3 million property assessments from the city of Edmonton~\footnote{https://data.edmonton.ca} and our non-government data included a random sample of people and businesses listings from the Canadian white pages~\footnote{https://whitepagescanada.ca}. 
The two tables were joinable on the address field. To create a golden set of matching rows, we manually developed some rules (in the form of regular expressions) that gave us an initial set of joinable candidates, and the final joinable rows were manually picked and validated.

\noindent	
\textbf{Synthetic dataset}
This is a generated set of tables to evaluate the scalability and the performance of our algorithms in a more controlled setting. Our synthetic data was in the form of pairs of tables where each row in one table joins with a row in the other table under some transformation.
Synth-N refers to a set of tables where each table has N rows, and the length of each row in the source table is randomly chosen in the range $[20,35]$. Synth-NL refers to a set of tables with longer rows where again each table has N rows, but the length of each row in the source table is randomly chosen in the range $[40,70]$. For example, Synth-50 and Synth-50L refer to tables with 50 rows, each row of length in the ranges $[20,35]$ and $[40,70]$ respectively.
The generation starts by creating a source table where each row is an alphanumeric string of a random length in the specified range. Then a set of transformations are generated for each source table. Each transformation consists of $p$ placeholders and $l$ literals, randomly chosen from the set of possible units each with a random set of valid parameters, and placed in a sequence to form a transformation. In our experiments, $p$ was set to 2, $l$ was chosen randomly from $\{1,2\}$, the length of a literal block was in range $[1,5]$, and the number of transformations to cover a source table was set at 3. Once the transformations were fixed for a source table, a transformation was randomly chosen and were applied to each source row to generate a target row.

\subsection{Experimental Setup}
\label{sec:exp_setup}
One parameter that controls the length of the transformations, as well as the size of the search space, is the number of placeholders or the tree depth in Auto-join. This parameter is a trade-off between the coverage and runtime of the method and is set to 3 in our experiments on web tables, open data, and synthetic datasets and 4 on the spreadsheet data due to existence of more smaller textual pieces in this dataset.
Our transformation units were those reported in Section~\ref{sec:problem_def} except \textit{TwoCharSplitSubStr}, which was excluded to better manage the runtime, especially for our baseline which struggled in some of our datasets. This did not have much impact on our results though since the transformation was less common in our datasets.

Auto-Join takes as parameters the number of samples (referred to as subsets) and the size of a sample. The approach is sensitive to these parameters, but the authors do not provide much guidance on how they should be set~\cite{autojoin}. Based on our analysis in Section~\ref{sec:sampling} and after experimenting with different sample sizes, setting the sample size at 2 yields the maximum coverage and the best pattern coverage in our benchmark dataset, and this is how this parameter is set in our experiments. The number of samples is set to 6 to keep the approach executable with our resources. The transformations extracted by all these samples form up the final covering set.

Our row matching to find joinable row pairs is done based on representative n-grams of sizes $[n_0,n_{max}]$. In our experiments on the benchmark dataset, $n_0 = 4$ yields the best f-score, hence $n_0$ is set to 4 in all our experiments. The value of $n_{max}$ is set to 20, which is large enough to generate n-grams of roughly up to half the length of the input rows.

Unless explicitly stated otherwise, all our performance result on the web and spreadsheet datasets are the mean over all tables of the dataset, and on the synthetic data is the mean over 10 independently generated tables with the same parameters. Our experiments are conducted on a machine equipped with AMD EPYC 7601 processor and up to 64~GB of memory allowed for each experiment.

\subsection{Effectiveness of Row Matching}

	\begin{table}[tbp]
	\centering
	\caption{Row matching performance}
	\shrink
	\begin{tabular}{lcccccc}
		\toprule
		\multicolumn{1}{c}{Dataset} & \#Rows & Avg Len. & \#Pairs & P     & R     & F1 \\
		\midrule
		Web tables & 92.13 & 31.08 & 112.55 & 0.81  & 0.93  & 0.86 \\
		Spreadsheet & 34.43 & 18.59 & 32.44 & 0.95  & 0.93  & 0.94 \\  
		Open data & 3808 & 19.33 & 360,125 & 0.01  & 0.92  & 0.02 \\
		Synth-50 & 50    & 27.59 & 44.20 & 1.00  & 0.88  & 0.94 \\
		Synth-50L & 50    & 55.41 & 48.00 & 1.00  & 0.96  & 0.98 \\
		Synth-500 & 500   & 27.64 & 416.10 & 0.97  & 0.81  & 0.87 \\
		Synth-500L & 500   & 55.26 & 460.40 & 0.96  & 0.89  & 0.92 \\
		\bottomrule
	\end{tabular}%
	\label{tab:row_matching}%
\end{table}%

Table~\ref{tab:row_matching} summarizes the performance of our n-gram row matching in terms of precision, recall, and F1-measure. There is a one-to-one matching relationship between source and target rows in our datasets, and a perfect number of matching pairs should be equal to the number of input rows.
Our row matching achieves an average precision and recall higher than 0.8 on the web and spreadsheet benchmarks, and all our synthetic datasets. Open data poses an interesting challenge with many addresses wrongly matched, leading to a recall of 0.92 but a precision of only 0.01. Generally maintaining a high precision in row matching is important in reducing both the number of bogus transformations and the transformation discovery time. That said, we will see in the next section that our transformation discovery algorithm nicely recovers from a poor row matching in Open data using sampling as discussed in Section~\ref{sec:sampling} and by applying a support threshold on transformations.
A high recall in row matching can also be important in cases where an input table or sample does not have many rows, and some transformations may not garner enough support otherwise. Increasing the input length does not much affect the performance, but the expected number of matching rows increases with the numbers of input rows.

\subsection{Transformation Coverage and Runtime}

\begin{table}[tbp]
	\setlength{\tabcolsep}{1.4pt}
	\centering
	\caption{Comparison of the performance and runtime  of our approach and the baseline}
	\shrink
	\begin{tabular}{|r|l||c|c|c|c|}
		\toprule
		\multirow{1}[4]{*}{\begin{sideways}\textbf{\scalebox{.55}{Matching}}\end{sideways}} & \multirow{2}[4]{*}{\textbf{Dataset}} & \multicolumn{4}{c|}{\textbf{Our Approach (Auto-Join) Compared}} \\
		\cmidrule{3-6}          &       & \textbf{Top Cov.} & \textbf{Coverage} & \textbf{\#Trans.} & \multicolumn{1}{c|}{\textbf{Time (Sec)}} \\
		\midrule
		\midrule
		\multirow{7}[2]{*}{\begin{sideways}\textbf{N-Gram}\end{sideways}} & Web tables & 0.58 (0.39) & 1.00 (0.43) & 25.71 (2.65) & 22 (269,174) \\
		& Spreadsheet & 0.73 (0.66) & 1.00 (0.77) & 6.32 (2.81) & 11 (77,390) \\
		& Open data & 0.30 (0.00) & 0.56 (0.00) & 3.00 (0.00) & 23386 (91177) \\
		& Synth-50 & 0.42 (0.42) & 1.00 (0.42) & 3.00 (1.00) & 5 (84,463) \\
		& Synth-50L & 0.40 (-) & 1.00 (-) & 3.00 (-) & 21 ($>$650,000) \\
		& Synth-500 & 0.39 (0.39) & 1.00 (0.71) & 18.00 (3.00) & 232(239,559) \\
		& Synth-500L & 0.35 (-) & 0.68 (-) & 49.00 (-) & 1026 ($>$650,000) \\
		\midrule
		\midrule
		\multirow{7}[2]{*}{\begin{sideways}\textbf{Golden}\end{sideways}} & Web tables & 0.58 (0.37) & 1.00 (0.44) & 13.94 (3.13) & 7 (200,281) \\
		& Spreadsheet & 0.78 (0.72) & 1.00 (0.84) & 6.00 (2.77) & 10 (52,819) \\
		& Open data & 0.30 (0.15) & 0.66 (0.15) & 8.00 (1.00) & 4147 (124,626) \\
		& Synth-50 & 0.42 (0.42) & 1.00 (0.42) & 3.00 (1.00) & 6 (302,647) \\
		& Synth-50L & 0.40 (-) & 1.00 (-) & 3.00 (-) & 27 ($>$650,000) \\
		& Synth-500 & 0.39 (-) & 1.00 (-) & 3.00 (-) & 432 ($>$650,000) \\
		& Synth-500L & 0.35 (-) & 1.00 (-) & 3.00 (-) & 2119 ($>$650,000) \\
		\bottomrule
	\end{tabular}%
	\label{tab:results}%
	\shrink
	\miniskip
\end{table}%

Table~\ref{tab:results} shows the performance of our approach and the baseline (shown inside parenthesis) in terms of the coverage of the transformations that are discovered, the number of transformations in the covering set, and the running time in seconds. The coverage is shown in terms of both the coverage of a single transformation that has the highest coverage and that of the covering set, respectively referred to as \textit{Top Cov.} and \textit{Coverage} in the table.
As noted in the previous section, an n-gram based row matching on Open data produces a large number of matching pairs and 99\% of those pairs are false matches. To reduce the input size for our transformation discovery, we sample this data by randomly taking 3000 pairs (i.e. a sampling rate of less than 1\%). At the same time, those false matches give rise to a large number of transformations with low support, but it is easy to get rid of those transformations by setting a minimum support threshold.
For our Open data, we set a support threshold of 1\% but no support threshold was set on our other datasets.

Auto-Join really struggles on some of our tables and does not finish even within a week. We set a time limit of 650,000 seconds (which is roughly 180 hours or one week), and if the algorithm does not finish within this time limit, we set its coverage to 0 and its time to 650,000 seconds. On the web dataset of 31 tables, this happens for 8 tables under the golden row matching and 11 tables under our n-gram row matching. The spreadsheet benchmark is less noisy, and the time limit is reached for 8 tables with golden row matching and 12 tables under n-gram matching out of 108 table pairs in the dataset.

In the top panel, the joinable row pairs are identified using our n-gram matching, and the algorithms for finding transformations are applied on the result of the row matching. The coverage of our approach in a few cases is less than 1.00. This happen in open data due to sampling and using a support threshold and in some of other datasets dues to a not-perfect performance of row matching.
With an imperfect row matching, some transformations can be missed or additional transformations may be generated (e.g., as shown for Synth-500 and Synth-500L where the number if transformations is larger than 3).
In the bottom panel, the result is under a gold standard row matching where joinable row pairs are given and our algorithm is applied on those rows. The coverage of our approach on all datasets is 1.00 except on open data where a support threshold was set on transformations.

A few observations can be made about the results. First, in our sampling of open data, we took only 3000 pairs out of about 360,000 matched pairs identified by n-gram matching. Interestingly, the sample included less than 1\% of the data, but the top transformation coverage was identical to that under a golden row matching. Also the coverage of top three transformations was very close, which stood at 0.56 when rows were identified using n-gram row matching and 0.59 under a perfect row matching. This indicates that even a small sample can yield a high coverage and our approach can be effectively adapted on datasets with millions of rows.
Second, comparing our approach to auto-join, we can see that our algorithm performs comparable or better in finding the top covering transformation with a running time that is 3-4 orders of magnitude better. While auto-join exhaustively searches the transformation space, it still can miss some transformations due to its sampling and the presence of noise in the input, whereas our approach is more robust (see our discussion in Section~\ref{sec:sampling}). On the problem of finding a covering set, Auto-Join does not find a covering set and only returns all transformations that cover at least one subset. For a covering set, we took all those transformations returned by auto-join. As shown in the table, the coverage of auto-join is, except in one case, all under 0.45, whereas our algorithm achieves a coverage of 1.00 except one case under n-gram matching and another case due to a limit on transformation support.
As expected, the running time of both approaches increase with the input length, but the increase in running time is only modest for our approach.

\subsection{End-to-end Join performance}
\begin{table}[tbp]
	\setlength{\tabcolsep}{1.1pt}
	\centering
	\caption{Join performance of our approach and baselines}
	\shrink
	\begin{tabular}{|l||c|c|c||c|c|c||c|c|c|}
		\toprule
		& \multicolumn{3}{c||}{Our Approach} & \multicolumn{3}{c||}{Auto-FuzzyJoin} & \multicolumn{3}{c|}{Auto-Join} \\
		\midrule
		Dataset & P     & R     & F     & P     & R     & F     & P     & R     & F \\
		\midrule
		Web tables & 0.879 & 0.726 & 0.713 & 0.935 & 0.672 & 0.708 & 0.985 & 0.415 & 0.466 \\
		Spreadsheet & 0.995 & 0.792 & 0.812 & 0.943 & 0.662 & 0.691 & 0.997 & 0.768 & 0.796 \\
		Open data & 0.955 & 0.553 & 0.700 & 0.601 & 0.700 & 0.647 & -     & -     & - \\
		Synth-50 & 1.000 & 0.964 & 0.979 & 0.977 & 0.512 & 0.657 & 1.000 & 0.420 & 0.592 \\
		Synth-50L & 1.000 & 0.998 & 0.999 & 0.985 & 0.656 & 0.755 & -     & -     & - \\
		Synth-500 & 1.000 & 0.831 & 0.890 & 0.960 & 0.455 & 0.584 & 1.000 & 0.712 & 0.832 \\
		Synth-500L & 0.995 & 0.929 & 0.955 & 0.986 & 0.642 & 0.750 & -     & -     & - \\
		\bottomrule
	\end{tabular}%
	\label{tab:join}%
\end{table}%

An important area of application for the generated transformations is joining two tables. To evaluate the end-to-end performance of transformations, we build a platform to perform the join both in our approach and Auto-join. We apply the transformations with a minimum support (set to 2\% for open data and 5\% for all other datasets) on the source column and for any rows with a matching transformed value on the target, we perform an equi-join to obtain joinable rows. We compare the performance to both Auto-join and another state-of-the-art method proposed by Li~et al.~\cite{AFJ:2021} referred to as Auto-FuzzyJoin (AFJ). AFJ does not use transformations for joining, and instead, it considers various similarity measures and heuristics to generate a boundary for joinable and non-joinable rows. As shown in Table~\ref{tab:join}, our approach outperforms both baselines in terms of F1-Score on all datasets. AFJ mainly uses similarity functions to detect joinable pairs; it does not return any transformations, and it cannot provide interpretable join patterns as our approach. Also, AFJ cannot perform well in cases where the source column is not a key and may contain duplicate values or noisy data such as our open data benchmark. Since Auto-join only finds transformations that cover all rows in a subset, it is able to achieve a higher precision. However, for the same reason, Auto-join misses many transformations for join, especially, in noisy data such as web tables, and it has lower recall and F1-Score. The precision-recall trade-off in our approach can be adjusted by the choice of a minimum support for the transformations.

\subsection{Effectiveness of Pruning Strategies}
\label{sec:pruning_performance}

\begin{table}[tbp]
	\setlength{\tabcolsep}{2.9pt}
	\centering
	\caption{Pruning performance of the approach}
	\shrink
	\begin{tabular}{|c|c||c|c|c|c|}
		\toprule
		\multirow{2}[2]{*}{\begin{sideways}\textbf{\scalebox{.5}{Matching}}\end{sideways}} & \multirow{2}[2]{*}{\textbf{Dataset}} & \textbf{Generated} & \textbf{Trans.} & \textbf{Duplicate} & \textbf{Cache } \\
		&       & \textbf{trans.} & \textbf{to try} & \textbf{trans.} & \textbf{hit ratio} \\
		\midrule
		\midrule
		\multirow{7}[2]{*}{\begin{sideways}\textbf{N-Gram}\end{sideways}} & Web tables &                190,100.8  &            49,560.7  & 52.1\% & 85.4\% \\
		& Spreadsheet &                167,191.6  &            53,924.4  & 45.0\% & 51.0\% \\
		& Open data &             3,628,823.0  &       1,848,653.0  & 49.1\% & 99.0\% \\
		& Synth-50 &                  76,624.0  &            35,552.8  & 52.4\% & 94.8\% \\
		& Synth-50L &                625,475.5  &          148,256.5  & 72.5\% & 96.7\% \\
		& Synth-500 &                584,663.4  &          274,491.2  & 51.8\% & 95.2\% \\
		& Synth-500L &             6,371,427.7  &       1,479,046.5  & 74.1\% & 97.3\% \\
		\midrule
		\midrule
		\multirow{7}[2]{*}{\begin{sideways}\textbf{Golden}\end{sideways}} & Web tables &                  78,922.7  &            30,636.9  & 45.8\% & 74.2\% \\
		& Spreadsheet &                147,049.1  &            50,606.2  & 44.9\% & 51.5\% \\
		& Open data &                794,078.0  &          435,771.0  & 45.1\% & 97.1\% \\
		& Synth-50 &                  90,553.7  &            40,832.4  & 53.1\% & 94.2\% \\
		& Synth-50L &                656,267.0  &          156,242.1  & 72.4\% & 96.3\% \\
		& Synth-500 &                745,167.0  &          344,282.5  & 52.2\% & 95.0\% \\
		& Synth-500L &             6,874,889.8  &       1,602,243.3  & 73.7\% & 96.6\% \\
		\bottomrule
	\end{tabular}%
	\label{tab:cache}%
\end{table}%

A challenge in searching for transformations that cover the input is the huge search space. Even though the search space is significantly reduced by introducing placeholders and maximal-length placeholders, our algorithm still has to go through a large number of transformations before compiling a set that can be returned. It is critical to eliminate or reduce early in the process the transformations that cannot contribute to an answer. We employ two pruning strategies: (1) removing duplicate transformations, which arise when transformations are generated independently for each row and the same transformation can be generated for different rows; (2) early detecting transformations that cannot cover a row hence reducing the search space for each row. In this section, we evaluate the effectiveness of our pruning strategies.

Table~\ref{tab:cache} summarizes the number of generated transformations for our datasets and the effectiveness of our pruning. In our real-world data, about half of the transformations are duplicates and are removed before being applied on any input rows. The fraction of duplicates is more than 50\% in most cases and increases as the input length gets larger. For instance, while, in the ``Synth-500L'' table, the length of input is about twice the length  in ``Synth-500'', the number of generated transformations is about 8 times more. However, the number of transformations after removing duplicates is only 4 times larger for `Synth-500L.''. Hence removing duplicates significantly cuts on the number of transformations to be considered.

In terms of early detecting transformations that cannot cover a row, our
cache hit ratio is more than 50\% in all datasets and more than 90\% in many datasets. While the number of transformations can be relatively large, a great portion of them are filtered by our cache-based pruning, which keeps for each row units that cannot cover the row. The savings in runtime is not shown here, but our experiment on the web tables dataset, where the cache hit ratio is 74\%, indicates that the runtime of the approach with our cache-based pruning drops to 61\% of the time it takes for the code to run without. Clearly, the cache-based pruning is effective in reducing the running time.

\subsection{Scalability}
In this section, we evaluate the scalability of our algorithm as the number and length of input rows increase.

\noindent
\textbf{Pruning performance varying the input size.}
In one experiment, we fixed the number of rows at 100 while varying the length of the rows from 20 to 280 characters. As the input length increases, there is a stronger chance that an arbitrary text in target is found in the source. This leads to a larger number of transformations, and many of them are duplicates.
As shown in Figure~\ref{fig:len_cache_dup}, the fraction of duplicate transformations increases up to a point where more than 98\% of the transformations are duplicates, and they are all removed at the generation phase. Removing duplicates deals with the growth in the number of transformations when dataset size grows horizontally.
Our cache hit ratio is more than 90\% and it remains relatively high as we increase the input length.
For example, in one experiment with only 100 input rows and an average row length of 100 characters, a total of 22.6 million transformations were generated among which 20.9 million (or 92.4\%) were detected as duplicates and were removed, leaving us only 1.7 million transformations to evaluate. Applying those remaining transformations to 100 input rows generates 170 million candidates, of which 160 million are removed via our early transformation detection pruning, leaving us with only 1.9 million possible mappings (or a cache hit ratio of 94.4\%) to be evaluated. If no pruning strategy was utilized, one would need to apply about 2.2 billion transformations instead. In the same experiment but with a table that had an average row length of 200 characters, our initial 77.7 million transformations were reduced to 2.5 million after removing duplicates, and the number of trials were reduced from 250 million to only 2.5 million. It can be observed that while the number of generated transformations is increased by a factor of 4 when the input length is doubled, the number of transformations to be applied only increases by a factor of 1.3 due to our pruning strategies.

\begin{figure}[tbp]
	\centering
	\includegraphics[width=0.97\linewidth]{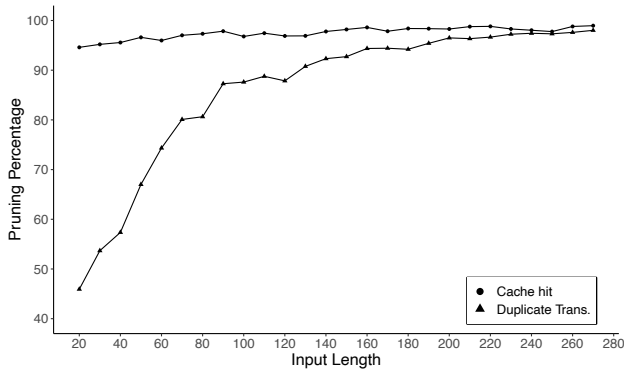}
	\caption{Effect of pruning on various input sizes}
	\label{fig:len_cache_dup}
	\shrink
\end{figure}

In another experiment (not reported here), we fixed the input length and varied the number of rows, and the fraction of filtered transformations remained relatively stable. More specifically, 50-60\% of the generated transformations were duplicates that were removed, and the cache hit ratio for non-covering transformations was more than 90\% for all sizes.

\begin{figure*}[]
	\centering
	\begin{subfigure}[b]{0.45\textwidth}
			\centering
			\includegraphics[width=1\linewidth]{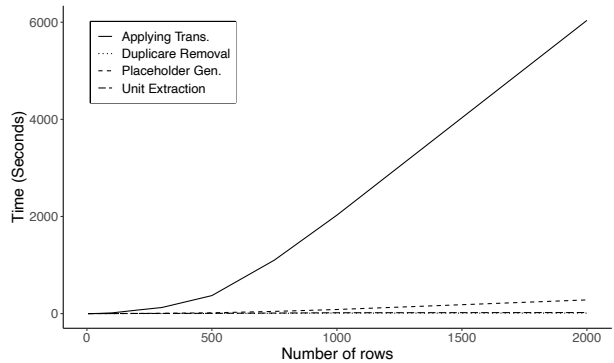}
			\caption{Vertical growth in the dataset}
			\label{subfig:row_times}
		\end{subfigure}
	~ \hspace{20pt}
	\begin{subfigure}[b]{0.45\textwidth}
			\centering
			\includegraphics[width=1\linewidth]{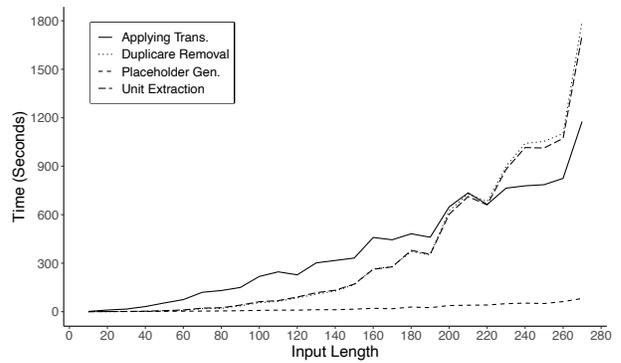}
			\caption{Horizontal growth in the dataset}
			\label{subfig:len_times}
		\end{subfigure}%
	\caption{Runtime for each module on various dataset sizes }
	\label{fig:times}
\end{figure*}

\noindent
\textbf{Running time varying the number of rows.}
Figure~\ref{subfig:row_times} shows the running time of our algorithm, broken down to different modules, with the length of the rows set at 28 and the number of rows varied in our synthetic data. As expected, the largest portion of the time is spent on applying the transformations, and this time is expected to increase somewhat quadratically with the input length. However, the pruning approaches are very effective in significantly reducing this time, and keeping the time curve closer to linear. It should be noted that one does not need to look at a whole dataset to find the transformations, and a small sample is often sufficient to discover all transformations with a relatively high coverage (as discussed in Section~\ref{sec:sampling}).

\noindent
\textbf{Running time varying the length of rows.}
Figure~\ref{subfig:len_times} shows the running time of our algorithm, again broken down to different modules, with the number of rows set at 100 and the input length varied.
With no pruning, the running is expected to be cubic on the input length (with the factor $l^{p=3}$, as discussed in Section~\ref{sec:comlexity}). However, our pruning approaches significantly reduce this time. More interestingly, when the input length passes a certain point (as shown in the figure), the time needed for applying transformations, which is the most time-consuming step of the approach, becomes  even less than the time for placeholder generation and duplicate removal. This is due to the increase in the number of duplicate transformations as the input length grows as well, as the cache hit ratio which is more than 95\% for non-covering transformations. Although duplicate removal is done via a hash set and is expected  to be $O(1)$, it still needs a larger portion of the time because the number of generated transformations increases significantly, and an exact comparison of the transformations is needed when the hashes match.

\section{Other Related Works}
\label{sec:related_works}
The literature related to our work can be grouped into (1) data preparation and cleaning, (2) finding related tables, (3) finding joinable rows, and (4) transforming tables based on examples.

\noindent
\textit{Data cleaning}
Data gathered from the web and spreadsheets may need to be preprocessed and cleaned before querying. Several studies have been conducted on extracting tables from web sources~\cite{WebTables,Automatic2013} and transforming the layout of spreadsheets into relational tables~\cite{foofah,FlashRelate,FlashFill}. Those cleaning steps may be expressed as some transformations. For example, Jin et al.~\cite{foofah} take a set of input-output examples and develop a greedy approach that finds a minimum-length transformation to map the input to the output.
The focus of this line of work is on data cleaning and transforming spreadsheets to relational tables (e.g., mapping multi-row entity descriptions into a single row) and not on joinability. These works are orthogonal to ours and may be utilized as data preprocessing steps in our approach.

\noindent
\textit{Finding related tables}
There is a large body of work on finding related tables that can be joinable~\cite{josie,zhu2016lsh,WebTables}, unionable~\cite{Table_union2018Nargesian,ling2013synthesizing}, or semantically similar~\cite{table2vec,Weblenz,zhang2021semantic}. Some of these works may be used, before applying our approach, to find tables that are joinable under some transformations. For instance, in JOSIE~\cite{josie} the authors define an efficient algorithm that utilizes the set overlap similarity for finding joinable tables. The algorithm may be extended to support the joinability under some transformations. Unlike this line of work,
the focus of our study is on transforming tables for joinability and not on finding relevant tables.

\noindent
\textit{Finding joinable rows}
Finding joinable row pairs when the values do not exactly match is a well-studied subject in the literature~\cite{Chaudhuri2003Robust,Chaudhuri2006fuzzy,fastjoin,MassJoin,yu2016string,auto-em}. The past works utilize various techniques ranging from  string similarity and token matching~\cite{Chaudhuri2003Robust,fastjoin} to entity resolution~\cite{auto-em}. Many of these works perform some forms of fuzzy join, where the focus is on finding the connection points or joinable pairs and not on learning interpretable transformations. For example, having the connection points between a subset of the rows from the tables being joined does not provide the connection points between the rest of the rows in those tables. This line of work is also orthogonal to ours and may be used as a preprocessing step before learning some transformations.

\noindent
\textit{Transforming tables based on examples}
There is some work on generating text transformations to perform a match~\cite{autojoin,autotransform,TDE,FlashFill,BlinkFill}. Some of these approaches exploit Programming By Example (PBE) techniques~\cite{FlashFill,BlinkFill,foofah} where the user provides a set of examples, and transformations are learned from those examples. One issue with these PBE methods is that the examples are not always available or may not be all correct, and it can be time-consuming for domain experts to provide them manually. Also, a small set of examples may not cover all transformations that are required for mapping source and target columns.
For example, FlashFill~\cite{FlashFill}, a pioneer in PBE based approaches, is designed to find a single transformation covering all examples and will not yield any transformations when datasets tend to have noise and may inherently need more than a single transformation to be covered, which is the case with most of real-world datasets.
FlashFill works fine mainly when the user manually provides a small reliable set of examples that are correct, which is not the case in our study. Unlike the PBE based methods, in our approach, a larger number of examples are automatically detected. Because of this auto-detection aspect of examples and the intricacies of integrating data from different sources, the examples can have noise and may not be covered using one transformation. Furthermore, building and intersecting directed graphs that FlashFill uses are computationally expensive when the number of examples grows.
On the other hand, our approach, with a focus on scalability, is considerably more robust to noise and inaccuracy in examples.
Another group of studies build search engines that collect the transformations from the web, GitHub repositories, and other sources~\cite{TDE,autotransform,DataXFormer}. This line of work is orthogonal to ours, in terms of the domains that it can be applied, and may require a significant amount of resources and may not provide much flexibility in terms of choosing, limiting, and customizing the transformations. Finally, our work falls within the class of studies that generate transformations automatically without requiring users to provide examples. Auto-join~\cite{autojoin} is one of the state-of-the-art approaches in this area, and we extensively compare our work to Auto-join. A key difference between auto-join and our approach is that auto-join is a back-tracking method that blindly searches the transformation space, while our approach exploits the textual evidences to shrink the search space.

\section{Conclusion, Limitations and Future Work}
\label{sec:conclusions}
We have studied the problem of efficiently joining textual data under the condition that the join columns are not formatted the same and cannot be equi-joined, but they become joinable under some transformations. We have developed an efficient algorithm over a rich set of basic operations that can be composed to form transformations. We have conducted both analytical and empirical evaluation of our algorithm and have compared its performance to a state-of-the-art approach. Our evaluation reveals that not only our algorithm covers a rich set of transformations but is also a few orders of magnitude faster than our competitor. 

Based on our analysis and experiments on benchmark datasets, the main limitations of our work can be broken down into the following three cases: 
(1) There is a gap between the performance of a golden row matching and our row matching algorithm (as shown in Table~\ref{tab:row_matching}), and this gap varies for different datasets. This gap translates to noise that is passed to the transformation discovery, which can produce false transformations. 
(2) As mentioned in Section~\ref{sec:exp_setup}, we limit the number of placeholders, which improves the running performance but some transformations can be missed.
(3) Some tables cannot be joined using string-based transformations only. For example, our transformations cannot capture semantic relations such as when one value is a synonym of another value.

Our approach can be extended in a few directions. One direction is transfer learning where transformations obtained for one dataset can be adapted to another dataset. 
Extending our work to employ non-string-based transformations considering semantics of the words is another interesting direction that deals with the limitations of textual transformations.

	\bibliographystyle{ACM-Reference-Format}
	\bibliography{ref}
	
	
\end{document}